\documentclass[a4paper, conference]{IEEEtran}
\bibliographystyle{IEEEtran} 
\setcounter{secnumdepth}{3} 
\usepackage[hidelinks]{hyperref} 
\usepackage{cleveref}
\usepackage{comment}
\usepackage{caption} 
\usepackage{dsfont}
\usepackage{amsthm}
\usepackage{amsmath}
\usepackage{amssymb}
\usepackage{lipsum}
\usepackage{esint}
\usepackage{graphicx, accents}
\usepackage{mathrsfs}             
\usepackage{amssymb}
\usepackage{epstopdf}
\usepackage{textpos}
\usepackage{lipsum}
\usepackage{color}
\usepackage{blindtext}
\usepackage{braket}
\usepackage{bbm}

\newtheorem{thm}{Theorem}\crefname{thm}{Theorem}{Theorems}
\crefname{lem}{Lemma}{Lemmas}
\crefname{prp}{Proposition}{Propositions}
\crefname{cor}{Corollary}{Corollaries}
\crefname{dfn}{Definition}{Definitions}
\crefname{section}{Section}{Sections}
\crefname{appendix}{Appendix}{Appendices}
\crefname{eq}{Equation}{Equations}

\makeatletter
\makeatother
\DeclareGraphicsRule{.tif}{png}{.png}{`convert #1 `dirname #1`/`basename #1 .tif`.png}

\DeclareMathOperator{\Prob}{Prob}
\DeclareMathOperator{\tr}{tr}
\DeclareMathOperator{\Var}{Var}
\newcommand{\R}{\mathbb{R}}

\newcommand{\C}{\mathbb{C}}

\newcommand{\E}{\mathbb{E}}
\newcommand{\state}[1]{{\ket{#1}\bra{#1}}}
\newcommand{\abs}[1]{\vert{#1}\vert}

\newcommand{\ot}{\otimes}

\newcommand{\nth}{n_\mathrm{th}}
\newcommand{\rth}{\rho_\mathrm{th}}
\renewcommand{\Re}{\mathrm{Re}}
\renewcommand{\Im}{\mathrm{Im}}

\begin{document}
\title{Noisy Feedback and Loss Unlimited \\Private Communication}
\author{
\IEEEauthorblockN{Dawei Ding}
\IEEEauthorblockA{Stanford Institute for Theoretical Physics\\
Stanford University\\
Stanford, CA 94305, USA \\
Email: dding@stanford.edu}
\and \IEEEauthorblockN{Saikat Guha}
\IEEEauthorblockA{College of Optical Sciences\\
University of Arizona\\
Tucson, AZ 85721, USA \\
Email: saikat@optics.arizona.edu}
}
\maketitle

\begin{abstract}
Alice is transmitting a private message to Bob across a bosonic wiretap channel with the help of a public feedback channel to which all parties, including the fully-quantum equipped Eve, have completely noiseless access. We find that by altering the model such that Eve's copy of the initial round of feedback is corrupted by an iota of noise, one step towards physical relevance, the capacity can be increased dramatically. It is known that the private capacity with respect to the original model for a pure-loss bosonic channel is at most $- \log(1-\eta)$ bits per mode, where $\eta$ is the transmissivity, in the limit of infinite input photon number. This is a very pessimistic result as there is a finite rate limit even with an arbitrarily large number of input photons. We refer to this as a loss limited rate. However, in our altered model we find that we can achieve a rate of $(1/2) \log(1 + 4 \eta N_S)$ bits per mode, where $N_S$ is the input photon number. This rate diverges with $N_S$, in sharp contrast to the result for the original model. This suggests that physical considerations behind the eavesdropping model should be taken more seriously, as they can create strong dependencies of the achievable rates on the model. For by a seemingly inconsequential weakening of Eve, we obtain a loss-unlimited rate. Our protocol also works verbatim for arbitrary i.i.d.\ noise (not even necessarily Gaussian) injected by Eve in every round, and even if Eve is given access to copies of the initial transmission and noise. The error probability of the protocol decays super-exponentially with the blocklength.
\end{abstract}


\section{Introduction}

We consider the task of continuous variable (CV) private communication over a lossy bosonic channel with additive noise, with a quantum-limited adversary. 
We consider a forward wiretap channel from Alice to Bob with an eavesdropper Eve. There is also a backward 
public side channel from Bob to Alice. All parties, including Eve, have perfectly noiseless access to communication on this side channel. This immediately presents a problem. All communication channels in reality have a non-zero level of noise. With error-correction the noise can be reduced, but never completely removed. Therefore this model is clearly unphysical. Indeed, most of the information theoretic capacity analyses of private communication and secret-key generation have traditionally assumed such a zero-error feedback. Admittedly, the process of abstraction that allows a problem to be mathematically analyzed inevitably causes such unphysical features to emerge. However, we will find in this paper that this particular feature can be problematic in that ameliorating it can create a significant difference in achievable rates.

More explicitly, we propose a protocol over a lossy bosonic channel and consider an alteration of the feedback-assisted private communication model in which Eve's copy of the public communication sent by Bob to Alice on just the initial round of the protocol is corrupted by a small amount of noise, a small step towards physical relevance. Eve obtains the remainder of the public communication noiselessly, and is otherwise quantum equipped, i.e., has perfect quantum memories and the ability to make arbitrary collective quantum measurements. We show that with such a seemingly inconsequential weakening of the eavesdropper, Alice and Bob can achieve a rate of $\frac{1}{2}\log(1+4\eta N_S)$ secure bits per mode~\footnote{Throughout this paper, logarithms are base 2.} using a simple laser-light modulation and homodyne detection, $N_S$ being the mean input photon number per mode. This is a loss unlimited rate, in sharp contrast to the loss limited upper bound known for the original model: $-\log(1-\eta)$~\cite{takeoka2015fundamental, pirandola2017fundamental}. 
The exact same protocol also works if the channel has noise in addition to loss. We will argue that the noise can even be non-Gaussian.

A closely related classical result~\footnote{Note that in the classical case the feedback-assisted private capacity of the Gaussian wiretap channel also has a loss limited capacity. This can be seen for instance by upper bounding it by the secret key capacity and upper bounding that via Theorem 4 of~\cite{maurer1993secret}.} was proven in~\cite{gunduz2008secret}. However, their altered communication model assumes Eve's copy of the feedback is corrupted by \textit{Gaussian additive noise} and on \textit{every} round. Thus this is a more specialized and significant departure from the usual model. They do consider more general correlated noise between the different channels involved, but this can be integrated into our model as well. 
In addition to the difference in model alteration, in the discussion section we will see that we can slightly \textit{strengthen} Eve from what is discussed above, by giving her copies of the initial transmission and noise, and still obtain our main result with the same protocol. Note that our results trivially apply to a classical additive white Gaussian noise (AWGN) channel and since adding more noise to Eve's feedback can only increase the achievable rate, implies the result in~\cite{gunduz2008secret}.

\section{Main Result}
Alice is to send a message to Bob over a bosonic (quantum) wiretap channel ${\cal N}$, which is to be kept private from Eve, by transmitting bosonic states on the forward channel $\cal N$ and using a backward noiseless classical feedback channel. Eve obtains the quantum output of the complement of the isometric extension of ${\cal N}$, and can eavesdrop on the feedback channel, but she obtains a noisy version of the classical feedback for the initial round. Denoting the classical feedback system as $Y$, for this round Eve obtains the output of a classical noisy channel $W(Y)$. 
Furthermore, we assume that the capacity of the noisy channel $W(Y)$ from Bob to Eve, is finite for a finite input power, that is,
\begin{equation*}
  C(W, P) \equiv \max_{Y: \E[Y^2] \le P} I(Y; W)
\end{equation*}
is finite for finite~\footnote{Note that the capacity is finite at all finite cost constraints iff it is finite at some finite cost constraint. The forward direction is trivial. For the reverse direction, assume for contradiction that there is some finite cost constraint at which the capacity is infinite. Then, by time-sharing we can conclude that is infinite at all finite cost constraints, a contradiction.} $P$. Note that this condition is very mild. For instance, if $W(Y) = Y + S$, with $S \sim {\rm Gaussian}(0,N_0)$, is AWGN, $N_0$ can be arbitrarily small as long as it is nonzero. Hence, we only need an iota of noise.
 We shall refer to this setting as \emph{asymmetric feedback-assisted private communication.} We say asymmetric because the feedback is noiseless to Alice but noisy to Eve on the initial round. A diagram of the communication model for the initial round is shown in~\cref{fig:model}. $B'$ is some leftover system that Bob keeps. For subsequent rounds no noise is applied to Eve's copy of $Y$.
\begin{figure}[h]
  \centering
  \includegraphics[width=0.7\columnwidth]{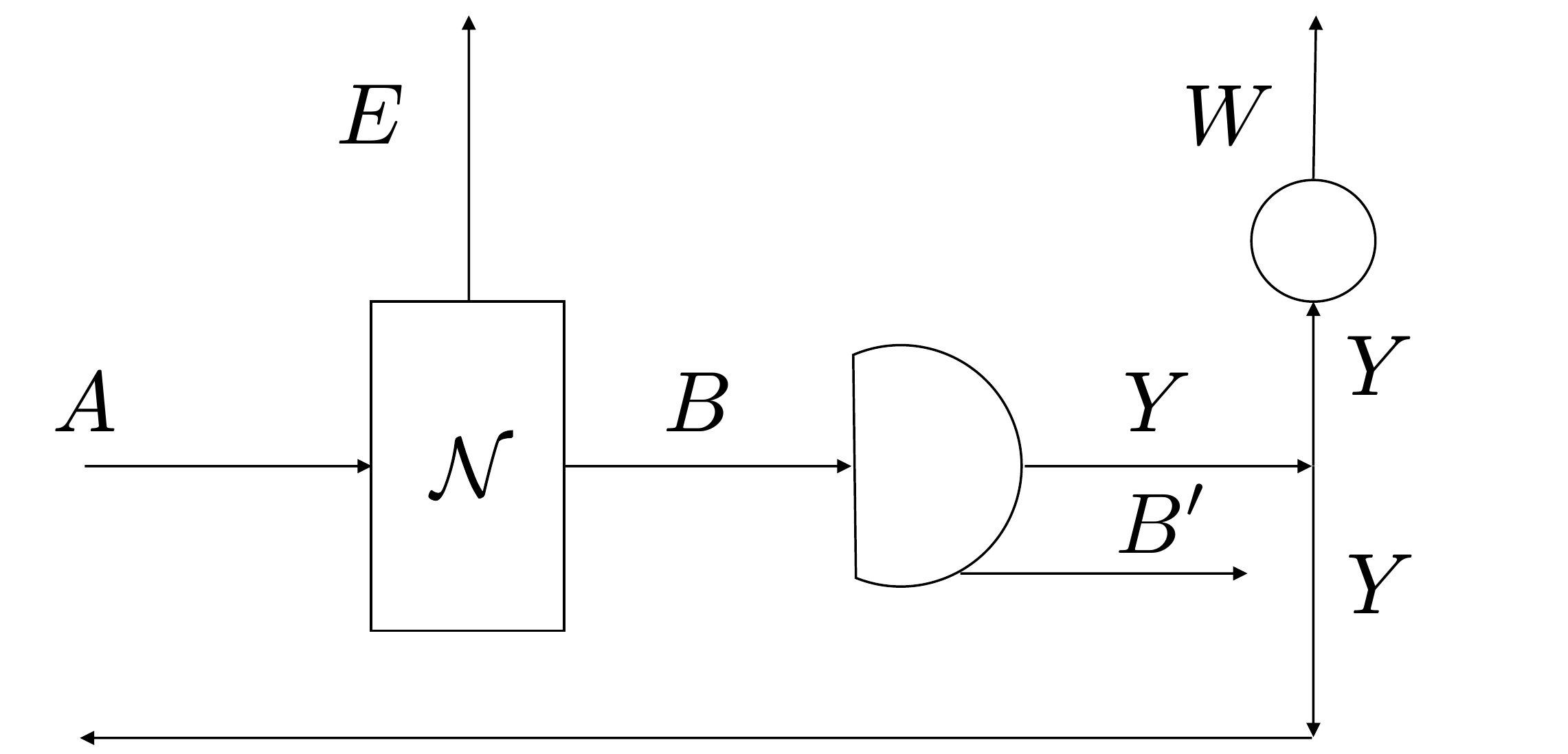}
  \caption{A diagram of the asymmetric feedback-assisted private communication model for the initial round of communication.}
  \label{fig:model}
\end{figure}

The privacy criterion we use is the same as the one used in definitions of private communication in various classic papers~\cite{wyner1975wire,csiszar1978broadcast,maurer1993secret} and in the classical result~\cite{gunduz2008secret}; i.e., we assume the message $M$ is uniformly distributed and require: 
  \begin{equation*}
    \lim_{k \to \infty} \frac{1}{k} I(M; \mathcal{E}) = 0,
  \end{equation*}
where $k$ is the number of rounds and $\mathcal{E}$ is the aggregation of Eve's systems after all $k$ rounds. Note that this is in general weaker than the privacy criterion used in quantum Shannon theory analyses, for instance~\cite{wilde2013quantum}, which requires the trace distance between Eve's states and a constant state independent of the message to vanish. We leave open whether we can obtain our result with this stronger privacy criterion.

We will consider the case where the forward channel is a single mode thermal bosonic wiretap channel: denoting Alice's, Bob's and Eve's optical modes as $\hat a, \hat b,$ and $\hat e$, respectively, the channel is described by the Heisenberg evolutions:
\begin{equation}
\hat b = \sqrt{\eta} \hat a + \sqrt{1 - \eta} \hat f; \qquad  \hat e = \sqrt{1-\eta} \hat a - \sqrt{\eta} \hat f,
\end{equation}
where $\hat f$ is the optical mode of the environment, which is in a thermal state ${\rho_{\rm th}}(0, n_{\rm th})$ with thermal number $\nth$, and $\eta$ is the transmissivity. A thermal state is given by $\rho_\text{th}(\alpha, \bar n) \equiv \int_{\C}^{} \frac{d^2 \beta}{\pi \bar n} e^{-| \beta |^2 / \bar n} \state{\alpha + \beta}.$ We enforce an average photon number constraint on Alice's input over the $i \in [0:n]$ rounds: 
$\frac{1}{n+1} \sum_{i=0}^{n} \tr[\hat a^\dagger \hat a \rho_{A_i}] \le N_S$.
We now state our main theorem.
\begin{thm}
  \label{thm:main}
  The rate
  \begin{equation}
    \label{eq:shannonHom}
    P_H = \frac{1}{2}\log \left (1+ \frac{N_S}{\sigma^2}\right),
  \end{equation}
  where  $\sigma^2 \equiv \frac{1}{4\eta} + \frac{1-\eta}{2\eta} \nth$, is achievable with mean photon per mode $N_S$ for asymmetric feedback-assisted private communication over the thermal bosonic wiretap channel, under the aforesaid model. 
\end{thm}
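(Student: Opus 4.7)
The plan is to adapt the Schalkwijk--Kailath (SK) iterative feedback scheme to the bosonic wiretap. In SK only the initial round carries the message; every subsequent round carries a Gaussian refinement of the previous-round channel-noise realization and is statistically independent of the message. Alice encodes her uniform message $M\in\{1,\dots,2^{kR}\}$ as a discrete point $\theta\in[-1/2,1/2]$ and in round $0$ transmits the coherent state with real amplitude $\sqrt{12N_S}\,\theta$. Bob homodynes, obtaining a Gaussian observation $Y_0$ whose noise variance, referred to Alice's input, is exactly the $\sigma^2$ of the theorem. He returns $Y_0$ on the public feedback channel: Alice receives it noiselessly, while Eve receives only $W(Y_0)$. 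In each subsequent round $i\ge 1$, Alice transmits $X_i=\gamma_i(\theta-\hat\theta_{i-1}^B)$ with the gain $\gamma_i$ chosen so that $\E[X_i^2]=N_S$; Bob homodynes and returns the outcome. The photon constraint is then saturated with equality each round, and the standard SK analysis shows Bob's residual variance is multiplied by $(1+N_S/\sigma^2)^{-1}$ per round, so any rate $R<P_H$ is achievable with super-exponentially small error in $k$.

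The crucial observation for privacy is that for every $i\ge 1$ the residual $\theta-\hat\theta_{i-1}^B$ is proportional to the past channel-noise sample $N_{B,i-1}$ alone, with the $\theta$-dependence canceling; consequently $X_i$ is statistically independent of $\theta$. This in turn implies that the multimode Gaussian quantum state of Eve on $\hat e_0,\dots,\hat e_n$ depends on $\theta$ \emph{only} through a displacement of the first moment of her round-$0$ mode $\hat e_0$ by $\sqrt{1-\eta}\sqrt{12N_S}\,\theta$: the full covariance matrix, including every inter-round correlation induced by the shared noise samples, is $\theta$-independent. By the Holevo formula for Gaussian ensembles, $\chi(\theta;\mathcal{E}^Q)=S(\bar\rho^{\mathcal{E}^Q})-S(\rho_\theta^{\mathcal{E}^Q})$ is therefore a finite constant depending only on $\eta$, $\nth$, and $N_S$, uniformly in $k$, and upper bounds the classical information any collective quantum measurement by Eve can extract. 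On the classical side, $I(\theta;W(Y_0))\le C(W,\Var[Y_0])<\infty$ by the finite-capacity hypothesis.

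Combining the two bottlenecks through the chain rule gives $I(M;\mathcal{E})\le I(\theta;\mathcal{E})\le \chi(\theta;\mathcal{E}^Q)+C(W,\Var[Y_0])=O(1)$ independently of $k$, so the privacy criterion $\lim_{k\to\infty}k^{-1}I(M;\mathcal{E})=0$ follows. The main technical obstacle is to rigorously establish that Eve's covariance matrix is truly $\theta$-independent in the presence of the feedback-induced cross-round correlations; this proceeds by induction on $i$, showing that the iteration residual satisfies $\epsilon_i=-N_{B,i}/(\sqrt{\eta}\,\gamma_i)$ and is therefore pure noise, so that $X_{i+1}=\gamma_{i+1}\epsilon_i$ carries no $\theta$-contamination through any moment. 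A secondary point is that $W$ need not be Gaussian---the finite-capacity assumption alone closes the argument---yielding the generalization mentioned in the introduction.
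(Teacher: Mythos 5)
Your protocol is exactly the paper's: the Schalkwijk--Kailath scheme run over the AWGN channel induced by coherent-state modulation and homodyne detection, with the message carried only in round $0$ and every later round refining the round-$0$ noise, so the error analysis and the rate $P_H$ come out the same way. The difference is in how you close the privacy argument, and as written it has two gaps. First, your final bound $I(\theta;\mathcal{E})\le \chi(\theta;\mathcal{E}^Q)+C(W,\Var[Y_0])$ omits part of $\mathcal{E}$: in this model Eve receives \emph{noiseless} copies of the feedback $Y_1,\dots,Y_k$ on every round after the first, and these are correlated with $E_0$ and $W_0$ through $N_0$, so a bare chain rule over $(\mathcal{E}^Q,W_0)$ does not cover them. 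The fix is available from the very structure you identified (all later transmissions are functions of the noises alone), but it has to be made explicit --- e.g.\ the paper's step $I(M;\mathcal{E})\le I(M;\mathcal{E},N_0)=I(M;E_0W_0N_0)$, or equivalently noting that the joint cq-state of $(E_0^k,Y_1^k)$ depends on $\theta$ only through a displacement of $E_0$ and then adding $W_0$ last with the Markov chain $\theta\to Y_0\to W_0$ to get the $I(Y_0;W_0)\le C(W,N_S+\sigma^2)$ term.

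Second, the claim that $\chi(\theta;\mathcal{E}^Q)=S(\bar\rho)-S(\rho_\theta)$ is ``a finite constant uniformly in $k$'' does not follow merely from the covariance matrix being $\theta$-independent: these are entropies of $(k+1)$-mode states, and one must argue the difference does not grow with $k$ (also $\bar\rho$ is a discrete mixture, not Gaussian, so a maximum-entropy step is needed). The missing argument is short --- since the reduced state on $E_1^k$ (and $Y_1^k$) is $\theta$-independent, $I(\theta;E_0^kY_1^k)=I(\theta;E_0\mid E_1^kY_1^k)\le 2S(E_0)\le 2\,g\bigl((1-\eta)N_S+\eta\nth\bigr)$ --- and this is essentially the paper's route, which conditions on $N_0$ and bounds the quantum leakage by $S(E_0)$ via the mean-photon-number bound on $E_0$. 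A minor further inaccuracy: the residual $\theta-\hat\theta^B_{i-1}$ is not proportional to the most recent noise sample alone but is a linear combination of $N_0,\dots,N_{i-1}$ (it is $N_0-\E[N_0\mid Y^{i-1}]$ up to scaling); this does not harm your argument, since all that is needed is its independence from $\theta$, but the stated formula $\epsilon_i=-N_{B,i}/(\sqrt{\eta}\,\gamma_i)$ should be dropped. With these repairs your proof coincides with the paper's.
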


\begin{proof}
  The rate is achieved by generalizing the Schalkwijk-Kailath protocol~\cite{schalkwijk1966coding} to the quantum setting.

  \textbf{Codebook}:
  Divide the interval $[- \sqrt{N_S}, \sqrt{N_S}]$ into $2^{nR}$ equal length intervals for some $R \in \R^+$. Given a message $m \in [2^{nR}]$, let $\theta(m)$ be the midpoint of the $m$-th interval. The codebook is then given by the coherent states $\ket{\theta(m)}$.

  \textbf{Encoding and Decoding}:
When we say Alice transmits a random variable $X$ on $\R$, we mean she sends the mixed state 
  \begin{equation*}
    \rho_X \equiv \int_\R dx p_X(x) \state {x}, 
  \end{equation*}
  where $\ket{x}$ is a coherent state. Note that $\E[X^2] = \tr[\hat a^\dagger \hat a \rho_X]$.
  Given message $m$, on round $i=0$, Alice sends the constant random variable $X_0$ which takes the value $\theta(m)$. Bob receives $\rho_\text{th}(\sqrt{\eta} \, \theta(m), (1-\eta) \nth)_{B_0}$, and Eve receives $\rho_\text{th}(\sqrt{1-\eta} \, \theta(m), \eta \nth)_{E_0}$.
  Bob then performs an $x$ homodyne measurement on his output, to obtain, with appropriate scaling, the output of an induced additive white Gaussian noise (AWGN) channel with input $X_0$:  $Y_0 \sim \mathcal{N}(\theta(m), \sigma^2)$. Now, on every round, Bob will simply send a copy of his measurement result through the feedback channel. Via the feedback channel Alice receives $Y_0$, while Eve receives $W_0(Y_0)$. 
 
  Denote the random noise that is added in the $i$-th induced AWGN channel from Alice to Bob by $N_i \equiv Y_i - X_i$. Note that since Alice has access to both $Y_i$ and $X_i$, she can always explicitly compute $N_i$. For round $i\ge 1$, Alice computes the conditional expectation $\E[N_0 \vert Y^{i-1}]$, where $Y^{i-1} \equiv Y_1 \dots Y_{i-1}$, that is, $Y_0$ is \textit{not} included, and $\E[N_0 \vert Y^0] \equiv \E[N_0]= 0$. She then transmits 
  \begin{equation*}
    X_i = \gamma_i (N_0 - \E[N_0 \vert Y^{i-1}]),
  \end{equation*}
  where $\gamma_i$ are chosen such that $\E[X_i^2] = N_S$. Note that this guarantees that the photon number constraint is satisfied and with equality. Bob receives the state
  \begin{equation*}
    \int_\R dx p_{X_i}(x) \rho_\text{th}(\sqrt{\eta} x, (1-\eta) \nth)_{B_i},
  \end{equation*}
  on which he will do a homodyne $x$ measurement to obtain $Y_i$, while Eve receives
  \begin{equation*}
    \int_\R dx p_{X_i}(x) \rth(\sqrt{1-\eta}x, \eta \nth)_{E_i}.
  \end{equation*}
  At the end of the protocol ($i =n$), Bob computes
  \begin{equation*}
    \Theta_n \equiv Y_0 - \E[N_0 \vert Y^n] = \theta(m) + (N_0 - \E[N_0 \vert Y^n]).
  \end{equation*}
  His guess of the message will then be $\hat m$, where $\theta(\hat m)$ is closest to $\Theta_n$.
  \\~\\
  \textbf{Error Analysis}:
  Note that this protocol is simply the Schalkwijk-Kailath protocol over an induced classical AWGN channel with noise $\sigma^2$. Therefore, by the analysis in~\cite{schalkwijk1966coding,el2011network}, the probability of error of this protocol is upper bounded by
  \begin{equation*}
    p_e \le \sqrt{\frac{2}{\pi}} \exp\left(- 2^{2n (P_H - R) -1} N_S/\sigma^2 \right).
  \end{equation*}
  Hence, $p_e \xrightarrow[n \to \infty]{} 0$  if $R < P_H$. For completeness, we reproduce this argument in~\cref{appx:SK}.
  \\~\\
  \textbf{Privacy Analysis}:
For round $i \ge 1$, the message is no longer involved in the protocol as Alice is simply trying to transmit the noise $N_0$, which is independent of the message $M$, to Bob. Hence we can upper bound the mutual information between the message and Eve's systems as follows:
  \begin{eqnarray}
    && I(M ; E_0^n W_0^n) \nonumber \\
    &\le& I(M ; E_0^n W_0^n N_0) \nonumber \\
    &=& I(M; E_0 W_0 N_0) + I(M; E^n W^n | E_0 W_0 N_0)\nonumber \\
    &=& I(M; E_0 W_0 N_0) \label{eq:privacyBound}.
  \end{eqnarray}
  The inequality follows from monotonicity of mutual information. The last equality follows from our observation that the subsequent transmissions are only about $N_0$ and independent of the message. 
  We explicitly compute the state at the end of round $i=0$. 
  We start with a uniformly distributed message:
  \begin{equation*}
    \frac{1}{2^{nR}} \sum_{m=1}^{2^{nR}} \state{m}_M.
  \end{equation*}
  Alice encodes the message into a coherent state:
  \begin{equation*}
    \frac{1}{2^{nR}} \sum_{m=1}^{2^{nR}} \state{m}_M  \ot \state{\theta(m)}_{A_0}.
  \end{equation*}
  $A_0$ is then sent into the thermal bosonic wiretap channel:
  \begin{align*}
    & \frac{1}{2^{nR}} \sum_{m=1}^{2^{nR}} \state{m}_M   \ot \int_{\C}^{}  \frac{d^2 \alpha}{\pi \nth} e^{-\abs{\alpha}^2/\nth} \\
    & \state{\sqrt{\eta} \theta(m) + \sqrt{1-\eta} \alpha}_{B_0}\\
    & \ot \state{\sqrt{1-\eta} \theta(m) - \sqrt{\eta} \alpha}_{E_0}.
  \end{align*}
  Bob then does a homodyne $x$ measurement, which produces the system $Y_0, N_0$ (None of the parties have posession of $N_0$.):
  \begin{align*}
    &\frac{1}{2^{nR}} \sum_{m=1}^{2^{nR}} \state{m}_M  \ot \int_{\C}^{} \frac{d^2 \alpha}{\pi\nth} e^{-\abs{\alpha}^2/\nth} \int_\R dy_0 \sqrt{\frac{2\eta}{\pi}} \\
    & e^{-2\eta (y_0 - \theta(m) - \sqrt{\frac{1-\eta}{\eta}} \Re(\alpha))^2} \state{y_0}_{Y_0} \\
    & \ot \state{y_0 - \theta(m)}_{N_0} \\
    & \ot \state{\sqrt{1-\eta} \theta(m) - \sqrt{\eta} \alpha}_{E_0}.
  \end{align*}
  Note that $\state{y_0}, \state{y_0 - \theta(m)}$ are classical states and are not coherent states. Bob sends a noisy copy to Eve to obtain the state:
  \begin{align}
    & \frac{1}{2^{nR}} \sum_{m=1}^{2^{nR}} \state{m}_M  \ot \int_{\C}^{} \frac{d^2 \alpha}{\pi \nth} e^{-\abs{\alpha}^2 /\nth} \int_\R d n_0 \sqrt{\frac{2\eta}{\pi }} \nonumber \\
    & e^{- 2\eta (n_0 - \sqrt{\frac{1-\eta}{\eta}}\Re(\alpha))^2 }  \state{n_0}_{N_0}  \nonumber\\
    & \ot \int_\R d w_0 p_{W_0 | Y_0} {(w_0|\theta(m) + n_0)} \state{w_0}_{W_0}\nonumber \\
    & \ot \state{\sqrt{1-\eta} \theta(m) - \sqrt{\eta} \alpha}_{E_0}, \label{eq:stateZero}
  \end{align}
  where $p_{W_0 | Y_0}$ is the probability distribution of the output of the noisy feedback channel given input $Y_0$ and we made the substitution $n_0 \equiv y_0 -\theta(m)$. 

  Using \eqref{eq:stateZero}, we can bound 
  \begin{eqnarray}
    &&I(M; E_0 W_0 N_0) \nonumber \\
    &=& I(M ; E_0 W_0 | N_0) + I(M; N_0) \nonumber\\
    &\le&  I(M Y_0; E_0 W_0) \nonumber \\
    &\le&  I(M ; W_0 | Y_0) + I(Y_0; W_0) + S( E_0 | W_0) \nonumber \\
    &\le&  I(Y_0; W_0) + S(E_0) \label{eq:privacyBound2},
  \end{eqnarray}
  where we used the facts that $M$ and $N_0$ are independent, $N_0 = Y_0 - X_0$, $M \to Y_0 \to W_0$ form a Markov chain, and that $S(E_0 \vert M Y_0 W_0) \ge 0$ since the conditioned systems are classical. Now, the first term is simply the mutual information between the input and output of the noisy channel to Eve on the initial round. Hence, it is upper bounded by the capacity $C(W, N_S + \sigma^2)$ of feedback channel from Bob to Eve with cost constraint $N_S + \sigma^2 = \E[(Y_0)^2]$: 
  \begin{align}
    I(Y_0; W_0) 
    & \le C(W, N_S + \sigma^2) \label{eq:Szero}.
  \end{align}
  This capacity is finite by assumption. We can bound the second term in~\eqref{eq:privacyBound2} by observing
  \begin{align*}
    & \tr[({\hat e}^\dagger {\hat e}) \rho_{E_0}] = \frac{1}{2^{nR}} \sum_{m=1}^{2^{nR}} \int_{\C}^{} \frac{d^2 \alpha}{\pi \nth} e^{-\abs{\alpha}^2/\nth} \\
    & \left[ \left( \sqrt{1-\eta}\theta(m) - \sqrt{\eta}\,\Re(\alpha) \right)^2 + \eta \Im(\alpha)^2 \right]\\
    & \le \frac{1}{2^{nR}} \sum_{m=1}^{2^{nR}} \int_{\C}^{} \frac{d^2 \alpha}{\pi \nth} e^{-\abs{\alpha}^2/\nth} \left[ (1-\eta)N_S + \eta \abs{\alpha}^2 \right]\\
    & = (1-\eta) N_S + \eta \nth,
  \end{align*}
  where in the inequality the term linear in $\Re(\alpha)$ vanishes after integrating over $\alpha$. Hence, $S(E_0) \le g(  (1-\eta) N_S + \eta \nth)$ where $g(x) \equiv (x+1) \log(x+1) - x \log x$ \cite{caves1994quantum,bekenstein1981universal}. 
  
  Since both terms of independent of $n$, we conclude
  \begin{align*}
    & \lim_{n+1 \to \infty} \frac{1}{n+1} I(M; E_0^n W_0^n)\\
    & \le \lim_{n+1 \to \infty} \frac{1}{n+1} I(M; E_0 W_0 N_0) \\
    & \le \lim_{n+1 \to \infty} \frac{1}{n+1}(I(Y_0; W_0) + S(E_0)) \\
    & \le \lim_{n+1 \to \infty} \frac{C(W,N_S + \sigma^2) + g((1-\eta) N_S + \eta \nth )}{n+1} \\
    & = 0,
  \end{align*}
  thereby establishing privacy.
\end{proof}

\section{Discussion}
We have proven that with respect to the communication model of asymmetric feedback where Eve obtains a slightly noisy copy of the feedback on the initial round of the protocol, it is possible to achieve a much higher rate of private communication over a lossy thermal-noise bosonic wiretap channel, even one that is loss unlimited~\eqref{eq:shannonHom}. 
It is worthwhile to make some observations about this result, as well as discuss its implications, limitations, and possible extensions. 

\subsection{Extensions}
Eq.\ \eqref{eq:shannonHom} is achievable with a probability of decoding error that decays doubly exponentially with the number of channel uses. That is, the rate is achievable with an infinite error exponent. Indeed, this was the main thrust of the result in~\cite{schalkwijk1966coding} for the setting of feedback-assisted classical communication. It was then extended by~\cite{gallager2010variations}, where it was shown that we can actually achieve a probability of error that goes as an exponential tower with an order that increases linearly with the blocklength. This was shown to be optimal by a corresponding lower bound on the error probability. This lower bound would also apply in our case since it only concerns the part where Alice reliably sends a message to Bob. The only essential difference between the protocol in~\cite{gallager2010variations} and that of Schalkwijk and Kailath is that the former preserves the discrete structure of the message. In particular, in both, only the initial transmission contains information about the message. It is then straightforward to make the same extension to our setting. 
That is, we can achieve~\eqref{eq:shannonHom} with a block probability of error
\begin{equation*}
  p_e \le \frac{1}{^{f(n)} e}
\end{equation*}
for sufficiently large $n$, where $^n a$ denotes the tetration operation,
\begin{equation*}
  f(n) \equiv \left\lfloor n[1-\varphi^{-1}(R)] - \frac{5(1-\varphi^{-1}(R))}{P_H - R} \right \rfloor,
\end{equation*}
and
\begin{equation*}
  \varphi(\nu) \equiv \frac{\nu}{2} \log \left(1 + \frac{N_S}{\sigma^2 \nu} \right).
\end{equation*}
Note that this expression has minor differences compared to the bound in~\cite{gallager2010variations} because we use bits instead of nats. Some simplifications were also made for presentation.

Another possible extension follows from observing that the achieved rate~\eqref{eq:shannonHom} is not optimal. This is because it is known that squeezed state encoding can achieve higher rates over the thermal noise lossy bosonic channel with a homodyne receiver~\cite{guha2004classical}. However, the induced classical channel with squeezed states encoding is AWGN, so our argument trivially applies to this case. With optimal squeezed-state encoding, we can achieve a rate~\cite{guha2004classical}:
  \begin{equation}
    \label{eq:shannonsqueezedHom}
    P_{\text{sq}} = \frac{1}{2}\log \left(1+ \frac{4N_S + 2 - f(\eta, N_S) + f(\eta, N_S)^{-1}}{((1-\eta)/\eta) + f(\eta, N_S)^{-1}}\right),
  \end{equation}
  where
\begin{equation*}
f(\eta, N_S) \equiv \frac{\eta\left[\left(1 + \frac{2(1-\eta)}{\eta}  \left(\frac{1+\eta}{2\eta}+2N_S\right)\right)^{1/2}-1\right]}{1-\eta}.
\end{equation*}
When Bob uses a homodyne detection receiver, a squeezed state encoding is conjectured to be the optimal encoding for any Gaussian bosonic channel. If this conjecture is proven true, Eq.\ \eqref{eq:shannonsqueezedHom} would be optimal for asymmetric feedback-assisted private communication assuming that Bob makes a homodyne detection since the private capacity is trivially upper bounded by the classical capacity.

\subsection{Non-Gaussian Channels}

An apparent limitation of our result is that it strongly relies on the induced classical channel from Alice to Bob being an AWGN channel. This fact was heavily used in for instance the error analysis of the Schalkwijk-Kailath protocol. However, as noted in Remark 17.3 of~\cite{el2011network}, this exact protocol, that is, the protocol with the conditional expectations $\E[N_0 | Y^{i-1}]$ realized in the AWGN case, can be applied for channels with non-Gaussian additive noise. In fact, it would work for any affine channel whose output is given by
\begin{equation}
  \label{eq:linearChannel}
  a (X + N),
\end{equation}
where $X$ is the input, $N$ is some general additive noise independent of $X$, and $a \neq 0$ is some scaling. The error analysis follows by observing that the estimate $\Theta_n$ is linear in $X_0, N_0^n$ as shown in~\cite{el2011network}. Hence, in the affine channel case the variance of $\Theta_n/a$ has the exact same expression in terms of $X_0, N_0^n$ as in the AWGN case, and so $\Var[\Theta_n] = a^2 \Var[N] 2^{-2n C(\Var[N], N_S)} $, where
\begin{align}
  \label{eq:nonGaussian}
  C(\Var[N], N_S) & \equiv \frac{1}{2} \log \left( 1+ \frac{N_S}{\Var[N]} \right) 
\end{align}
is the capacity of the AWGN channel with the corresponding noise statistics. Then, by Chebyshev's inequality, 
\begin{align*}
  p_e & \le \Prob\left\{ \abs{\Theta_n - \theta(m)} > \sqrt{N_S} 2^{-nR} \right\} \\
  & \le a^2 2^{-2n(C(\Var[N], N_S) - R)} \frac{\Var[N]}{N_S}.
\end{align*}
This works for general affine channels, albeit at the cost of losing the doubly exponential decay. The privacy analysis can also be generalized since the argument did not make use of Gaussianity. 
Note that~\eqref{eq:nonGaussian} is in general less than the actual capacity of the channel, but it is still loss unlimited. In particular, our result applies to a general lossy bosonic wiretap channel where the additive noise (state of the ${\hat f}$ mode) can be any state, that is, it does not even need to be Gaussian. This is shown in~\cref{fig:nonGauss}.
\begin{figure}[h]
  \centering
  \includegraphics[width=0.3\textwidth]{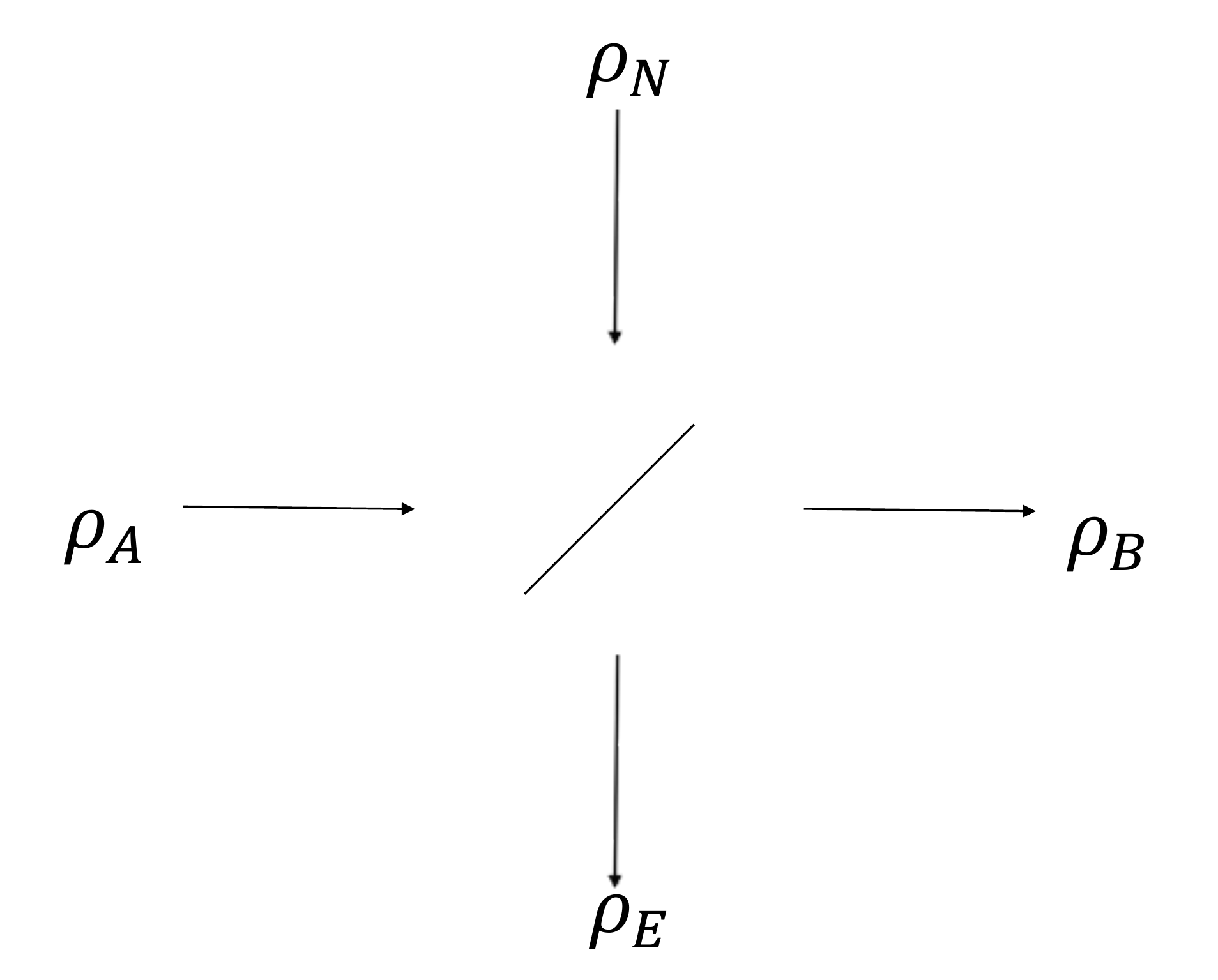}
  \caption{A non-Gaussian channel corresponding to a bosonic wiretap channel where the additive noise $\rho_N$ is not Gaussian.}
  \label{fig:nonGauss}
\end{figure}
It is simple to see that the induced classical channel in this case is an affine channel. This subsumes a large class of physical optical channels.

However, it would be interesting to extend our result to a general quantum wiretap channel. 
Unfortunately, there are channels for which it is not possible to induce a classical affine channel via a particular encoding and decoding. A trivial example is the constant channel which outputs a state regardless of the input. Admittedly, such a channel is useless for communication, so it is not surprising that our protocol fails there. For the specific case of coherent encoding and homodyne detection, we can find explicit nontrivial channels for which we do not induce an affine channel. For instance, consider the channel, a unitary self-Kerr interaction, which maps a coherent state to a cat state: $\ket{\alpha} \mapsto \frac{1}{\sqrt{2}} (\ket{\alpha} + \ket{-\alpha})$. It is not difficult to see that a homodyne measurement of a cat state will always have an expected value of zero. If the channel is affine: $X \mapsto a(X +N)$, then given $\E[ a(X+N)]=0$, either $a= 0$ or $\E[N] = - \E[X]$. If $a =0$, then the measurement result is always zero, which is clearly wrong. The latter possibility is also impossible since additive noise must be independent of the message. The conclusion follows. Thus, it is not clear whether our protocol can be extended to general quantum channels.


\subsection{Communication Model}
We assumed that Eve can eavesdrop on both the (noisy, quantum) forward and (classical) backward channels, but that her copy of the initial round of classical feedback has some noise. From a physical standpoint, this is a natural assumption since there is always \textit{some} noise in the physical layer. It might appear unnatural that the feedback error happens on the initial round, but this could be realized by Alice and Bob somehow knowing the time window in which the feedback noise might occur and starting the protocol during that window. Our constraint on feedback noise is reasonable as well, that any physical Eve-to-Bob channel should not have infinite capacity.

One might say that in order to make the model more physically relevant, there should be noise in the feedback channel to Alice as well. Indeed, it is this very noiselessness that is key to the result that we obtain. The channel to Alice is a noiseless CV classical channel, which has an~\textit{infinite} capacity, while we constrained the channel to Eve to have \textit{finite} capacity. It is this infinite difference that allows us to get a loss unlimited rate~\footnote{Note that this is an infinite difference in capacity between the channels from Bob to Alice and Eve, and we are trying to achieve forward private communication from Alice to Bob. Thus it is not clear a priori that this should be possible.}. Hence, it is unlikely that we can achieve loss unlimited rates if we allow noise in the channel to Alice as well. Indeed, it is known that linear schemes such as Schalkwijk-Kailath cannot achieve any positive rates in this regime~\cite{kim2007gaussian}. 
However, this does not undermine our thesis that~\eqref{eq:shannonHom} indicates issues with the usual assumption of noiselessness of the public side channel. The fact that a classical noiseless CV channel has infinite capacity introduces significant unphysical features to the model and makes achievable rates strongly dependent on the model. For by just adding an infinitesimal amount of noise to the eavesdropper's copy of the initial round, one single step towards physical relevance, we obtain a loss unlimited rate instead of a loss limited one.

Now, it is somewhat queer that we can achieve a markedly different rate even if we allow Eve to have noiseless access to the feedback after the initial round. For an intuition for why this works, we observe that the Schalkwijk-Kailath protocol can be summarized as follows: Bob on the initial round receives the message with additive noise $N_0$ and subsequently learns the value of $N_0$ to exquisite precision. Since Eve has \emph{additional noise} to what Bob received on the initial round, even if Eve can also learn $N_0$ to exquisite precision, the accuracy with which she can recover the message is still limited by the additional noise. Indeed, we can even \textit{further alter} our communication model so as to give Eve $N_0$ exactly as done by~\eqref{eq:privacyBound}, thereby allowing her to completely simulate the protocol for rounds $i>0$, and our result would still hold. We can also give her a copy $X_0'$ of the quantum system~\footnote{Note that this is \textit{not} the classical random variable that is encoded into the quantum system, but the encoded quantum system itself.} $X_0$ that Alice sends on round $i=0$. This would simply replace $E_0$ in the privacy analysis by $X_0' E_0$, whose entropy we can again bound via subadditivity and then by bounds on photon number. Thus, we can alter our communication model so that Eve is \textit{strengthened}, and our result still holds. This reflects again that our protocol essentially depends on that initial round of feedback where there is an infinite difference in capacities.

At first glance the fact that we only need noise on the initial round appears to have implausibly strong implications. One might think our result holds in the usual feedback-assisted private communication model but where Alice and Bob have a pre-shared secret key which they can use to simulate the noise. For instance, by performing bitwise addition with the pre-shared key, they can simulate an additive noise channel with noise $S_0$ on the initial round of feedback. However, since any reasonable pre-shared key has finite Shannon entropy, $S_0$ is discrete. Then, the capacity of the feedback channel would be infinite, which precludes us from bounding the information leaked to Eve.
Hence it does not seem sufficient to have pre-shared secret key to obtain our result.

Lastly, it is worth looking at the closely related communication models of two way private communication, secret key agreement, and quantum key distribution. If we make the same alteration, that is, introduce noise in the classical feedback in the initial round, Bob can simply send an arbitrarily long locally generated random bit string to Alice during that round. Alice will then receive this key noiselessly, while Eve will only receive a noisy copy whose mutual information with the key is finite since the capacity of the noisy feedback channel is finite. Hence, in these models the infinite difference in capacities trivially leads to an infinite rate. This again shows the dependence of the rates to the model considered and suggests that in general we should more seriously address blatant unphysical features in the communication model.

~

\noindent \textbf{Acknowledgements.} 
DD is supported by the Stanford Graduate Fellowship and the National Defense Science and Engineering Graduate Fellowship. SG was supported by the Communications and Networking with Quantum Operationally-Secure Technology for Maritime Deployment (CONQUEST) program funded by the Office of Naval Research (ONR) under a Raytheon BBN Technologies prime contract \# N00014-16-C-2069. 
We would like to thank Patrick Hayden and Tsachy Weissman for valuable discussions and feedback. DD would like to thank God for all of His provisions.

\bibliography{SKQKD}

\cleardoublepage


\appendix
\section{Error Analysis of the Schalkwijk-Kailath Protocol}
\label{appx:SK}

We reproduce for completeness the explicit error analysis of the Schalkwijk-Kailath protocol~\cite{schalkwijk1966coding} as given in~\cite{el2011network} but with our notation. Consider the vector $\mathbf{N} \equiv (N_0, \dots, N_n)$. This is clearly jointly Gaussian since $N_i$ are independent and Gaussian. We claim that  $(N_0 , Y_1, \dots, Y_n)$ is also jointly Gaussian where $Y^n$ are mutually independent and $Y_i \sim \mathcal{N}(0,N_S+ \sigma^2)$. 
\begin{proof}
We proceed by induction. For the base case, we first note that $Y_1 = \gamma_1 N_0 + N_1$, so we obtain $(N_0, Y_1 , N_2 , \dots N_n)$ by a linear transformation, so it's jointly Gaussian with $\E[Y_1] = 0$. Furthermore, $\E[Y_1^2] = \gamma_1^2 +\sigma^2 = N_S+\sigma^2$ since $N_0, N_1$ are independent.

Now we consider the inductive case. Assume $(N_0, Y_1, \dots , Y_k , N_{k+1}, \dots, N_n)$ is jointly Gaussian and $Y^k$ are mutually independent and $Y_i \sim N(0,N_S+\sigma^2)$. Then, $(N_0, Y_1, \dots , Y_k)$ is jointly Gaussian, so $\E[N_0 \vert Y^k]$, which is also the minimum mean squared error (MMSE) estimate, is affine in $Y^k$. Since $Y^k$ and $N_0$ both have mean zero, it is actually linear. Thus, $X_{k+1} = \gamma_{k+1} (N_0 - \E[N_0 \vert Y^k])$ is linear in $N_0, Y^k$. Hence, $(Y_1, \dots, Y_{k+1} = X_{k+1} +N_{k+1})$ is jointly Gaussian. By the orthogonality principle for linear MMSE estimates, $X_{k+1}$ is independent of $Y_i$ for $i \in [1:k]$. Furthermore, $N_{k+1}$ is also independent of $Y_i$. Since $Y^k$ are mutually independent, we conclude that $(Y_1, \dots , Y_{k+1})$ is jointly Gaussian and uncorrelated. Thus, they are mutually independent. It is also clear that $\E[Y_{k+1}] = 0 $. Furthermore, $\E[Y_{k+1}^2] = N_S+\sigma^2$ since $\E[X_{k+1}^2] = N_S$ by construction and $X_{k+1},N_{k+1}$ are independent. Finally, $(N_0, Y_1, \dots, Y_{k+1} , N_{k+2}, \dots, N_n)$ is jointly Gaussian, so we're done.
\end{proof}

We next expand $I(N_0 ; Y^n)$ in two different ways. First,
\begin{align*}
  I(N_0 ; Y^n) &= \sum_{i=1}^{n} I(N_0 ; Y_i \vert Y^{i-1})\\
  &= \sum_{i=1}^{n} h(Y_i \vert Y^{i-1}) - h(Y_i \vert N_0, Y^{i-1})\\
  &= \sum_{i=1}^{n} h(Y_i) - h(N_i \vert N_0, Y^{i-1}) \\
  &= \sum_{i=1}^{n} h(Y_i) - h(N_i) \\
  & = \frac{n}{2} \log\left( 1+ \frac{N_S}{\sigma^2}\right) \\
  & = n P_H
\end{align*}
where the third equality holds because $Y^n$ are mutually independent and $Y_i$ is a function of $Y^{i-1}$, $N_0$, and $N_i$. $P_H$ is the capacity of the AWGN channel with power $N_S$ and noise $\sigma^2$ given in~\eqref{eq:shannonHom}. The second way to calculate this gives
\begin{align*}
  I(N_0 ; Y^n) & = h(N_0) - h(N_0 \vert Y^n) \\
  & = \frac{1}{2} \log(2\pi e \Var[N_0]) - \frac{1}{2} \log(2\pi e \Var[N_0 \vert Y^n])\\
  & = -\frac{1}{2} \log\frac{\Var[N_0 \vert Y^n]}{\sigma^2},
\end{align*}
where the second equality follows from the fact that $N_0 \vert Y^n = y^n$ is Gaussian with variance independent of $y^n$ by the joint Gaussianity of $(N_0, Y_1, \dots, Y^n)$. Note that $\Var[N_0 \vert Y^n]$ is a random variable, but since it does not depend on $y^n$, we can identify it with the value it takes on almost surely.

We conclude from the two ways to write the mutual information that $\Var[N_0 \vert Y^n] = \sigma^2 2^{-2n P_H}$. Now, $\Theta_n = Y_0 - \E(N_0 \vert Y^n)$ is Gaussian since it's linear in $Y_0^n$. Furthermore,
\begin{align*}
  \Var{\Theta_n} & = \E[(N_0 - \E[N_0 \vert Y^n])^2] \\
  & = \E[\E[(N_0 - \E[N_0 \vert Y^n])^2 \vert Y^n]] \\
  & = \Var[N_0 \vert Y^n],
  \label{}
\end{align*}
where the second equality follows by the law of iterated expectation. Thus, $\Theta_n \sim \mathcal{N}( \theta(m), \sigma^2 2^{-2n P_H})$. 

We make a decoding error only if $\abs{\Theta_n-\theta(m)} > \sqrt{N_S} 2^{-nR}$. Hence, $p_e \le 2 Q(2^{n(P_H- R)} \sqrt{N_S/\sigma^2})$, where 
\begin{equation*}
  Q(x)\equiv \int_x^\infty \frac{1}{\sqrt{2\pi}} e^{-t^2/2} dt.
  \label{}
\end{equation*}
Now, we know that for $x \ge 1$, $Q(x) \le \frac{1}{\sqrt{2\pi}} e^{-x^2/2}$~\cite{durrett2010probability}, so if $R< P_H$ and $n$ is large enough, 
\begin{equation*}
  p_e \le \sqrt{\frac{2}{\pi}} \exp\left( -\frac{2^{2n(P_H-R)}N_S}{2\sigma^2} \right)
  \label{}
\end{equation*}
Note that we used the channel $n+1$ times, so the rate we achieve is actually $\frac{n}{n+1}R$. 

\end{document}